\newtheorem{theorem}{Theorem}
\newtheorem{definition}{Definition}
\newtheorem{lemma}{Lemma}
\newtheorem{remark}{Remark}
\newlength\figureheight
\newlength\figurewidth
\DeclareMathOperator*{\argmin}{arg\; min}     
\newcommand{\I}{{{\rm I}}}
\newcommand{\card}[1]{\left|#1\right|}
\newcommand{\ie}{{\it i.e.}}
\newcommand{\norm}[1]{\left\lVert#1\right\rVert}
\newcommand{\T}{^{\top}}
\title{Convex Optimization Based State Estimation against Sparse Integrity Attacks}
\author{Duo Han$^*$, Yilin Mo$^*$ and Lihua Xie$^*$
  \thanks{$*$: School of Electrical and Electronic Engineering, Nanyang Technological University, Singapore. Email: {dhanaa,ylmo,elhxie}@ntu.edu.sg}
}
\begin{document} \maketitle

\begin{abstract}
We consider the problem of robust estimation in the presence of integrity attacks. There are $m$ sensors monitoring the state and $p$ of them are under attack. The malicious measurements collected by the compromised sensors can be manipulated arbitrarily by the attacker. The classical estimators such as the least squares estimator may not provide a reliable estimate under the so-called $(p,m)$-sparse attack. In this work, we are not restricting our efforts in studying whether any specific estimator is resilient to the attack or not, but instead we aim to present some generic sufficient and necessary conditions for robustness by considering a general class of convex optimization based estimators. The sufficient and necessary conditions are shown to be tight, with a trivial gap.
\end{abstract}

\section{Introduction}

The concept of networks has been increasingly prevailing for decades, e.g., computer networks, sensor networks or social networks. Regardless of numerous benefits introduced by bridging machines or humans through networks, the interconnect and distributed nature renders networks vulnerable to various kinds of attacks, ranging from physical attacks to internet viruses to groundless rumors through online social networks. This article is concerned with the integrity attacks in sensor networks which are widely embedded in various industrial systems such as smart grid~\cite{MassoudAmin2005} or Supervisory Control And Data Acquisition (SCADA) systems~\cite{boyer2002scada}. During the integrity attack, the adversary can take full control of a subset of sensors and arbitrarily manipulate their measurements. The motivations for launching such an attack in industrial systems may include creating arbitrage opportunities in electricity market, stealing gas or oil without being noticed, posing potential threat to national defense, etc. Since the first SCADA system malware (called Stuxnet) was discovered and extensively investigated~\cite{Chen2010,Fidler2011}, increasing research attention has been paid to resolve the security issues in estimation and control systems~\cite{challengessecurity}.

In this article, we focus on the problem of robust estimation against compromised sensory data in order to mitigate the damage caused by the integrity attack. Robustness for an estimator is urgently needed since quite a number of the commonly used estimators under attack fail to give a reliable estimate and thus lead to poor system performance. For instance, a linear estimator is not robust since one bad measurement is enough to ruin the final estimate. A better estimator may be the geometric median of all measurements~\cite{Lopuhaa1991}. To be concrete, we consider the problem of estimating a vector state $x\in\mathbb R^n$ from measurements collected by $m$ sensors, where the measurements are subject to any random noise. For practical reasons, the spatially distributed sensors cannot be fully guaranteed to be secure. Some of them may be controlled by the attacker and due to the resource limitation the attacker can only attack up to $p<m$ sensors. Without posing any restrictions on the attacker, we assume that the compromised sensory data can be arbitrarily changed.

\emph{Related Work}: A quite similar problem in the context of power systems is bad data detection, which has been studied over the past decades~\cite{Handschin1975,Mili1985}. The method of checking the magnitude of residue is useful for identifying random bad data or outliers but may not work for intentional integrity attacks \cite{henrik2010,Xie2011}. For example, Liu et al.~\cite{liu2009} successfully showed that a stealthy attack changing the state while not being detected is possible. Kim et al.~\cite{Kim2014} studied a so-called framing attack. Under such a attack, the bad data detector is misled to delete those critical measurements, without which the network is unobservable and a convert attack may be launched.

For dynamical systems, detecting malicious components via fault detection and isolation based methods has also been extensively studied,  \cite{fp-ab-fb:09b,Pasqualetti2011,wirelesscontrol,Fawzi2012,chong2015observability}. However, in most of these works, the system is assumed to be noiseless, which greatly favors the failure detector. Pajic et al.~\cite{Pajic2014} improved the work by considering the systems with bounded noise. On the top of sufficient conditions for exact recovery in noiseless case, they showed that the worst error is still bounded even under attack. However, their estimator is based on a combinatorial optimization problem, which in general is computational hard to solve and may not be applicable for large scale systems. In \cite{Mo2010,moscs10security}, the authors use reachability analysis and ellipsoid approximation to characterize all possible biases the adversary can inject to the system.

In the area of statistics, the concept of robust estimators is not new~\cite{Kassam1985,robust2006,robust2009}. The robustness is often measured by breakdown points~\cite{Hampel1971,donoho1983notion} or influence functions~\cite{Hampel1974}. Many existing works studied one or several estimators and discussed the breakdown point properties \cite{Yohai2012,Rousseeuw1992,Hossjer1994,Rousseeuw2012}. However, a unified analysis for most useful estimators is still absent.

Motivated by different behaviors of various estimators under the integrity attacks, we manage to provide a unified robustness analysis framework integrating most commonly used estimators. To reach this goal, we first give a formal definition on the robustness of an estimator. To achieve greater generality, a general convex optimization based estimator is proposed and necessary and sufficient conditions on the robustness of such an estimator is proved. The significance of this work is that the analytical results presented in this manuscript can be used for characterizing and designing a robust estimator in the presence of compromised sensory data.

The rest of the paper is organized as follows. In Section \ref{section:problemsetup} we formulate the robust estimation problem. Our main results on the robustness of a general convex optimization based estimator is presented in Section \ref{section:main}. The concluding remarks are given in Section \ref{section:conlusion}.


\section{Problem Setup}\label{section:problemsetup}
\subsection{System Model}
Assume that $m$ sensors are measuring the state $x$ and the measurement equation for the $i$th sensor is given by
\begin{align}\label{eq:goodsystem}
  z_i = H_i x + w_i,
\end{align}
where $x\in\mathbb R^n$ is the state of interest, $z_i\in \mathbb R^{m_i}$ is the ``true'' measurement collected by the $i$th sensor, and $w_i \in \mathbb R^{m_i}$ is the measurement noise for the $i$th sensor. The measurement matrix $H\triangleq [H_1\T,H_2\T,\ldots,H_i\T]\T\in\mathbb R^{(\sum_i m_i)\times n}$ is assumed to be observable, \ie, $H$ is full column rank. In the presence of attacks, the measurement equation can be written as
\begin{align}\label{eq:badsystem}
  y_i = z_i + a_i = H_i x + w_i + a_i,
\end{align}
where $y_i\in \mathbb R^{m_i}$ is the ``manipulated'' measurement and $a_i\in \mathbb R^{m_i}$ is the attack vector. In other words, the attacker can change the measurement of the $i$th sensor by $a_i$. Denote
\begin{align}
 z&\triangleq[z_1\T,z_2\T,\ldots,z_m\T]\T, &
y&\triangleq[y_1\T,y_2\T,\ldots,y_m\T]\T, \\
w&\triangleq[w_1\T,w_2\T,\ldots,w_m\T]\T,&
 a&\triangleq[a_1\T,a_2\T,\ldots,a_m\T]\T.\nonumber
\end{align}

Denote the index set of all sensors as $\mathcal S\triangleq\{1,2,\ldots,m\}$. For any index set $\mathcal I\subseteq \mathcal S$, define the complement set to be $\mathcal I^c\triangleq \mathcal S\backslash\mathcal I$. In our attack model, we assume that the attacker can only compromise at most $p$ sensors but can arbitrarily choose $a_i$. Formally, a $(p,m)$-sparse attack can be defined as
\begin{definition}[$(p,m)$-sparse attack]
A vector $a$ is called a $(p,m)$-sparse attack if there exists an index set $\mathcal I\subset \mathcal S$, such that:
\begin{enumerate}[(i)]
  \item $\norm{a_i} = 0,~\forall i\in \mathcal I^c ;$
  \item $\card{\mathcal I} \leq p.$
\end{enumerate}
\end{definition}
Define the collection of a possible index set of malicious sensors as
\[
\mathbb C\triangleq\{\mathcal I:\mathcal I\subset\mathcal S,\card{\mathcal I} = p\}.
\]
 The set of all possible $(p,m)$-sparse attacks is denoted as
\[
\mathcal A \triangleq \bigcup_{\mathcal I\in\mathbb C}\{a: \norm{a_i} = 0, i\in \mathcal I^c\}.
\]

The main task of this work is to investigate the generic sufficient and necessary conditions for an estimator to be robust to $(p,m)$-sparse attacks. To this end, we first formally define the robustness of an estimator.

\begin{definition}[Robustness]
An estimator $g:\mathbb R^{\sum_i m_i }\mapsto \mathbb R^n$ which maps the measurements $y$ to a state estimate $\hat x$ is said to be robust to the $(p,m)$-sparse attack if it satisfies the following condition:
\begin{align}
\norm{g(z)-g(z+a)}\leq \mu(z),~\forall a\in\mathcal A,\label{eq:defrobust}
\end{align}
where $\mu:\mathbb R^{\sum_i m_i}\mapsto \mathbb R$ is a real-valued mapping on $z$.
\end{definition}
The robustness implies that the disturbance on the state estimate caused by an arbitrary attack is bounded. A trivial robust estimator is $g(y)=0$ which provides very poor estimate. Therefore, another desirable property for an estimator is translation invariance, which is defined as follows:
\begin{definition}[Translation invariance]
An estimator $g$ is translation invariant if $g(z+Hu)=u+g(z),~\forall u\in\mathbb R^n$.
\end{definition}
\begin{remark}
  Notice that if an estimator is robust and translation invariant, then
  \begin{align*}
    \| g(z) - g(z+a) \|&= \|x + g(w) - x + g(w+a) \| \\
                       &= \|g(w)-g(w+a)\|\leq \mu(w).
  \end{align*}
  Therefore, the maximum bias that can be injected by an adversary is only a function of the noise $w$.
\end{remark}
In the next subsection, we propose a general convex optimization based estimator which is translation invariant.

\subsection{A General Estimator}
A large variety of estimators are developed by the research community to solve the state estimation problem. In order to achieve greater generality, we first propose a general convex optimization based estimator. We then show that many estimators can be rewritten in this general framework.

The estimator that we study in this paper is assumed to have the following form:
\begin{align}
  \hat x = g(y) \triangleq \argmin_{\hat x} \sum_{i\in\mathcal S} f_i(y_i-H_i\hat x),
  \label{eq:general}
\end{align}
where the following properties of function $f_i:\mathbb R^{m_i}\mapsto \mathbb R$ are assumed:
\begin{enumerate}[(i)]
\item $f_i$ is convex.
\item $f_i$ is symmetric, \ie, $f_i(u) = f_i(-u)$.
\item $f_i$ is non-negative and $f_i(0) = 0$.
\end{enumerate}

\begin{remark}\label{remark:1}
  It is easy to check that the estimator $g$ is translation invariant. One can view $y_i-H_i\hat x$ as the residue for the $i$th sensor and $f_i$ as a cost function. The convex constraints on $f_i$ ensures that the minimization problem can be solved in an efficient (possibly also distributed) way. The symmetric assumption on $f_i$ is typically true for many practically used estimator and can actually be relaxed. The last assumption implies that the cost achieves minimum value when the residue is $0$.
\end{remark}
We now investigate several commonly used estimator and show that they can be written as \eqref{eq:general}.
\begin{enumerate}[(a)]
\item Least Square Estimator:
  \begin{align}
    \hat x &=  \argmin_{\hat x} \norm{y-H\hat x}_2^2=  \argmin_{\hat x} \sum_{i\in\mathcal S}\norm{y_i-H_i\hat x}_2^2\nonumber\\
           & = (H\T H)^{-1}H\T y. \label{eq:lsestimator}
  \end{align}

\item Another example is an estimator which minimizes the sum of the $l_1$ norm of the residue, \ie,
  \begin{align}
    \hat x = \argmin_{\hat x} \sum_{i\in\mathcal S}\norm{y_i-H_i\hat x}_1.  \label{eq:medianestimator}
  \end{align}
  In the case that $m_i=n$ and $H_i=\I_n,~\forall i$, the estimate is a vector in which the $i$th entry is the median over the $i$th entries of all measurements $y_i$'s.

\item The following is designed to minimize the sum of the $l_2$ norm of the residue:
  \begin{align}
    \hat x = \argmin_{\hat x} \sum_{i\in\mathcal S}\norm{y_i-H_i\hat x}_2.  \label{eq:geometricmedianestimator}
  \end{align}
  The optimal estimate in the case that $m_i=n$ and $H_i=\I_n,~\forall i$ is the geometric median of all $y_i$'s, which is called an $L_1$ estimator in \cite{Lopuhaa1991}. In other words, $\hat x$ is the point in $\mathbb R^n$ that minimizes the sum of Euclidean distances from $y_i$ to that point.

\item Pajic et al.~\cite{Pajic2014} proposed the following robust estimator in the presence of integrity attack:
  \begin{align*}
    & \mathop{\textit{minimize}}\limits_{\hat x,a,w}&
    & \|w\|^2\\
    &\text{subject to}&
    &y = H\hat x + w + a,\,\|a\|_0\leq q.
  \end{align*}
  However, the minimization problem involves zero-norm, and thus is difficult to solve in general. A commonly adopted approach is to use $L_1$ relaxation to approximate zero-norm, which leads to the following minimization problem:
  \begin{align}
    & \mathop{\textit{minimize}}\limits_{\hat x,a,w}&
    & \|w\|^2+\lambda \|a\|_1\label{eq:optlasso}\\
    &\text{subject to}&
    &y = H\hat x + w + a.\nonumber
  \end{align}
  If we define the following function:
  \begin{align}
    d(u)~\triangleq ~&\mathop{\textit{minimize}}\limits_{a_i}&
    &  \norm{u-a_i}_2^2 + \lambda  \norm{a_i}_1 \label{eq:lasso2}
  \end{align}
  Then one can easily prove that the optimization problem \eqref{eq:optlasso} can be rewritten as
  \begin{align}
    \hat x = \argmin_{\hat x} \sum_{i\in\mathcal S} d(y_i - H_i \hat x).\label{eq:lasso}
  \end{align}
\end{enumerate}

In the next section, we shall present sufficient and necessary conditions for the robustness of the general estimator \eqref{eq:general}. Since \eqref{eq:medianestimator},~\eqref{eq:geometricmedianestimator} and \eqref{eq:lasso} are all special cases of \eqref{eq:general}, we can easily analyze their individual robustness.

\section{Robust Analysis for a General Estimator}\label{section:main}
This section is devoted to the derivation of necessary and sufficient conditions for the robustness of the general estimator.
Denote the compact set $\mathcal U\triangleq\{u\in \mathbb R^n:\norm{u}= 1\}$. Before proceeding to the main results, we need the following lemma.
\begin{lemma}\label{lemma:convex}
  Let $q:\mathbb R\rightarrow \mathbb R$ be a convex function and $q(0) = 0$, then $q(t)/t$ is monotonically non-decreasing on $t\in \mathbb R^+$. Moreover,
  \begin{align}
    q(t+1)-q(t)\geq q(t)/t.
    \label{eq:marginalincrease}
  \end{align}
\end{lemma}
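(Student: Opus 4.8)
The plan is to handle the two assertions separately, both flowing directly from convexity together with the anchoring condition $q(0)=0$. First I would establish the monotonicity of $t\mapsto q(t)/t$ on $\mathbb R^+$. Fix $0<s<t$ and write $s$ as a convex combination of $0$ and $t$, namely $s=\tfrac{s}{t}\,t+\bigl(1-\tfrac{s}{t}\bigr)\cdot 0$. Convexity of $q$ then gives $q(s)\leq \tfrac{s}{t}q(t)+\bigl(1-\tfrac{s}{t}\bigr)q(0)=\tfrac{s}{t}q(t)$, where the last equality uses $q(0)=0$. Dividing through by $s>0$ yields $q(s)/s\leq q(t)/t$, which is precisely the claimed non-decreasing property.

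For the marginal-increase inequality \eqref{eq:marginalincrease}, I would appeal to the standard three-point (increasing chord-slope) characterization of convex functions: for any $a<b<c$ one has $\frac{q(b)-q(a)}{b-a}\leq \frac{q(c)-q(b)}{c-b}$. Applying this with $a=0$, $b=t$, and $c=t+1$, and again substituting $q(0)=0$, the left-hand slope becomes $q(t)/t$ and the right-hand slope becomes $q(t+1)-q(t)$, giving exactly $q(t+1)-q(t)\geq q(t)/t$.

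If I prefer a self-contained argument that avoids quoting the chord-slope lemma, I would instead write the middle point $t$ as the convex combination $t=\lambda\cdot 0+(1-\lambda)(t+1)$ with $\lambda=\tfrac{1}{t+1}$, so that convexity and $q(0)=0$ give $q(t)\leq \tfrac{t}{t+1}q(t+1)$; multiplying out and rearranging this into $q(t)\leq t\bigl(q(t+1)-q(t)\bigr)$ recovers \eqref{eq:marginalincrease} directly. I do not anticipate a genuine obstacle here, since the entire lemma is a routine consequence of convexity; the only care required is in choosing the convex combinations so that the endpoint $0$ is exploited to eliminate $q(0)$ and expose the ratio $q(t)/t$ in both parts.
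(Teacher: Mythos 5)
Your proof is correct and follows essentially the same route as the paper: the monotonicity argument is identical (writing the smaller point as a convex combination of $0$ and the larger point to eliminate $q(0)$), and your self-contained variant for \eqref{eq:marginalincrease} is exactly the paper's step of applying that monotonicity at $t$ and $t+1$ and rearranging $q(t)\leq \tfrac{t}{t+1}q(t+1)$. The chord-slope formulation you cite first is just a repackaging of the same inequality, so there is no substantive difference.
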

\begin{proof}
  For any $0 < \alpha < 1$, we have
  \begin{align*}
    q(\alpha t) \leq \alpha q(t) + q(0) = \alpha q(t).
  \end{align*}
  Divide both side by $\alpha t$, we can prove that $q(t)/t$ is monotonically non-decreasing. Therefore, $q(t+1)/(t+1)\geq q(t)/t$, which implies \eqref{eq:marginalincrease}.
\end{proof}
As a consequence of Lemma~\ref{lemma:convex}, we know that $f_i(tH_iu)/t$ is monotonically non-decreasing. As a result, there are only two possibilities:
\begin{enumerate}[(i)]
\item $f_i(tH_iu)/t$ is bounded for all $i$ and for all $u$, which implies that the limit $\lim_{t\rightarrow\infty}f_i(tH_iu)/t$ exists.
\item $f_i(tH_iu)/t$ is unbounded for some $i$ and $u$.
\end{enumerate}
The next lemma provides several important properties for the case where $\lim_{t\rightarrow\infty}f_i(tH_iu)/t$ exists, whose proof is reported in the appendix:
\begin{lemma}\label{lemma:1}
  If the following limit is well defined, \ie, finite, for all $u\in \mathbb R^{n}$:
  \begin{align}
    \lim_{t\rightarrow\infty} \frac{f_i(tH_iu)}{t} = C_i(u),
    \label{eq:sublinear}
  \end{align}
  then the following statements are true:
  \begin{enumerate}[(i)]
  \item $C_i(\alpha u)=\card{\alpha} C_i(u)$ and $C_i(u_1+u_2)\leq C_i(u_1)+C_i(u_2)$.
  \item Define the function $h_i(u,v,t):\mathbb R^n\times\mathbb R^{m_i}\times \mathbb R\mapsto\mathbb R$,
    \begin{align}
      h_i(u,v,t) \triangleq \frac{1}{t}\left[f_i(v+tH_iu)-f_i(v)\right].\label{eq:defhfunction}
    \end{align}
    Then the following pointwise limit holds:
    \begin{align}
      \lim_{t\rightarrow\infty}h_i(u,v,t)  = C_i(u).
      \label{eq:deltalimit}
    \end{align}
    Moreover, the convergence is uniform on any compact set of $(u,v)$.
  \item For any $v$ and $u$, we have that
    \begin{align}
      f_i(v+H_iu) - f_i(v) \leq C_i(u).
      \label{eq:deltaw}
    \end{align}
  \end{enumerate}
\end{lemma}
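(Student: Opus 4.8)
The plan is to handle the three parts in sequence, in each case leveraging the convexity of $f_i$ and the monotonicity it induces. For part (i), I would obtain the positive homogeneity $C_i(\alpha u) = \card{\alpha} C_i(u)$ directly from the defining limit \eqref{eq:sublinear} by the change of variable $s = \card{\alpha} t$, which converts $f_i(tH_i\alpha u)/t$ into $\card{\alpha}\, f_i(sH_iu)/s$ (the sign is absorbed by symmetry $f_i(-\cdot)=f_i(\cdot)$), with the case $\alpha=0$ following from $f_i(0)=0$. For subadditivity I would apply midpoint convexity to get
\begin{align*}
  f_i\bigl(tH_i(u_1+u_2)\bigr) \leq \tfrac{1}{2} f_i(2tH_iu_1) + \tfrac{1}{2} f_i(2tH_iu_2),
\end{align*}
divide by $t$, and pass to the limit; after the substitution $s=2t$ the right-hand side tends to $C_i(u_1)+C_i(u_2)$.

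For part (ii), the key observation is that for fixed $(u,v)$ the map $t\mapsto\phi(t)\triangleq f_i(v+tH_iu)$ is convex, being the composition of $f_i$ with an affine map, so that $h_i(u,v,t) = (\phi(t)-\phi(0))/t$ is the slope of a secant of a convex function issuing from the origin and hence is monotonically non-decreasing in $t$. To identify the limit I would sandwich $f_i(v+tH_iu)/t$ between two quantities both converging to $C_i(u)$: an upper bound from the decomposition $v+tH_iu=\tfrac{1}{2}(2v)+\tfrac{1}{2}(2tH_iu)$ and convexity, and a matching lower bound from writing $tH_iu=\tfrac{1}{2}(-v)+\tfrac{1}{2}(v+2tH_iu)$ together with symmetry, which gives $f_i(v+2tH_iu)\geq 2f_i(tH_iu)-f_i(v)$. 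Both sides rescale to $C_i(u)$, and since $f_i(v)/t\to 0$ the pointwise limit $h_i\to C_i(u)$ in \eqref{eq:deltalimit} follows.

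For the uniform convergence on compact sets I would invoke Dini's theorem. Each $h_i(\cdot,\cdot,t)$ is continuous in $(u,v)$ because a finite convex $f_i$ is continuous; the family is monotone non-decreasing in $t$ as noted above; and the limit $C_i$ is continuous, being sublinear by part (i). A monotone family of continuous functions converging pointwise to a continuous limit on a compact set converges uniformly there. Finally, part (iii) is an immediate consequence of the same monotonicity: evaluating at $t=1$ gives $h_i(u,v,1)\leq\lim_{t\to\infty}h_i(u,v,t)=C_i(u)$, which is precisely \eqref{eq:deltaw}.

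I expect the main obstacle to be the uniform-convergence claim, where the hypotheses of Dini's theorem must be checked with care — in particular the continuity of the limit $C_i$, which I obtain for free from the sublinearity established in part (i), and the continuous-parameter version of Dini applied to the monotone family indexed by $t$. The sandwich bounds in part (ii) also require choosing the convex combinations so that both the upper and lower estimates collapse to the same constant $C_i(u)$ after rescaling.
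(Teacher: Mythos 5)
Your proposal is correct and follows essentially the same route as the paper's own proof: the same change-of-variable argument plus convexity decompositions (identical up to a rescaling of $t$) for part (i) and the two-sided sandwich in part (ii), the same secant-slope monotonicity from Lemma~\ref{lemma:convex}, and the same appeal to Dini's theorem (with the limit continuous because $C_i$ is a semi-norm) for uniform convergence. Your part (iii) — reading \eqref{eq:deltaw} off as $h_i(u,v,1)\leq\lim_{t\rightarrow\infty}h_i(u,v,t)=C_i(u)$ — is just a cleaner phrasing of exactly the monotonicity argument the paper uses.
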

\begin{remark}\label{remark:force}
Intuitively speaking, one can interpret $f_i$ as a potential field and the derivative of $f_i$ as the force generated by sensor $i$ (if it is differentiable). By \eqref{eq:deltaw}, we know that the force from the potential field $f_i$ along the $u$ direction cannot exceed $C_i(u)$ (or $C_i(u)/\|u\|$ to normalize). On the other hand, Equation~\eqref{eq:deltalimit} implies that this bound is achievable.
\end{remark}

We now give the sufficient condition for the robustness of the estimator.
\begin{theorem}[Sufficient condition]
  If the following conditions hold:
  \begin{enumerate}
  \item $C_i(u)$ is well defined for all $u\in \mathbb R^{n}$ and all $i\in\mathcal S$;
  \item the following inequality holds for all non-zero $u$:
    \begin{align}
      \sum_{i\in\mathcal I}C_i(u)<\sum_{i\in\mathcal I^c}C_i(u),~\forall \mathcal I \in\mathbb C,\label{eq:sufficiency}
    \end{align}
  \end{enumerate}
  then the estimator $g$ is robust.
  \label{theorem:sufficient}
\end{theorem}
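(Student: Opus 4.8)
The plan is to argue by contradiction, turning the ``force balance'' intuition of Remark~\ref{remark:force} into an asymptotic inequality among the $C_i$. Suppose $g$ were not robust at $z$; then there is a sequence of attacks $a^{(k)}\in\mathcal A$ with $t_k\triangleq\norm{g(z+a^{(k)})-g(z)}\to\infty$. Writing $\hat x_0 = g(z)$ and $\hat x_k = g(z+a^{(k)})$, each $a^{(k)}$ is supported on some index set in $\mathbb C$; since $\mathbb C$ is finite and the unit sphere $\mathcal U$ is compact, I would pass to a subsequence along which the support set $\mathcal I\in\mathbb C$ is fixed and the normalized displacements $u_k \triangleq (\hat x_k-\hat x_0)/t_k$ converge to some $u^\ast\in\mathcal U$, so in particular $u^\ast\neq 0$.

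The engine of the argument is the optimality of $\hat x_k$. Setting $r_i^{(k)} \triangleq z_i + a_i^{(k)} - H_i\hat x_0$, so that $z_i+a_i^{(k)}-H_i\hat x_k = r_i^{(k)} - t_k H_i u_k$, the inequality $\sum_i f_i(z_i+a_i^{(k)}-H_i\hat x_k)\le \sum_i f_i(z_i+a_i^{(k)}-H_i\hat x_0)$ rearranges, after dividing by $t_k$, to
\begin{align*}
\sum_{i\in\mathcal I^c}\frac{f_i(r_i^{(k)}-t_kH_iu_k)-f_i(r_i^{(k)})}{t_k} + \sum_{i\in\mathcal I}\frac{f_i(r_i^{(k)}-t_kH_iu_k)-f_i(r_i^{(k)})}{t_k}\le 0.
\end{align*}
The two sums are treated differently. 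For the attacked sensors $i\in\mathcal I$ the residues $r_i^{(k)}$ are entirely under the attacker's control, but applying \eqref{eq:deltaw} with $v = r_i^{(k)}-t_kH_iu_k$ (and the shift $H_i(t_ku_k)$), together with the homogeneity $C_i(t_ku_k)=t_kC_i(u_k)$ from Lemma~\ref{lemma:1}(i), yields the clean one-sided bound $f_i(r_i^{(k)}-t_kH_iu_k)-f_i(r_i^{(k)})\ge -t_kC_i(u_k)$. Hence each attacked term is bounded below by $-C_i(u_k)$ no matter how large $r_i^{(k)}$ is.

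For the honest sensors $i\in\mathcal I^c$ the residues $r_i^{(k)} = z_i - H_i\hat x_0$ are \emph{fixed} (independent of $k$), so the pairs $(-u_k,r_i^{(k)})$ stay in a compact set and the uniform convergence in \eqref{eq:deltalimit} gives $\big(f_i(r_i^{(k)}-t_kH_iu_k)-f_i(r_i^{(k)})\big)/t_k = h_i(-u_k,r_i^{(k)},t_k)\to C_i(-u^\ast)=C_i(u^\ast)$. Since each $C_i$ is positively homogeneous and subadditive (Lemma~\ref{lemma:1}(i)) it is a finite sublinear function on $\mathbb R^n$, hence continuous, so $C_i(u_k)\to C_i(u^\ast)$ as well. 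Passing to the limit in the displayed inequality therefore produces $\sum_{i\in\mathcal I^c}C_i(u^\ast)-\sum_{i\in\mathcal I}C_i(u^\ast)\le 0$, i.e. $\sum_{i\in\mathcal I}C_i(u^\ast)\ge \sum_{i\in\mathcal I^c}C_i(u^\ast)$ for the non-zero direction $u^\ast$, contradicting the strict inequality \eqref{eq:sufficiency}.

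I expect the main obstacle to be the honest-sensor limit: because the direction $u_k$ moves with $k$, a mere pointwise statement would not let me pass to the limit, and the full strength of the \emph{uniform} convergence in Lemma~\ref{lemma:1}(ii), combined with the compactness extraction of $u^\ast$ on $\mathcal U$ and the finiteness of $\mathbb C$, is exactly what makes the honest contribution converge to $\sum_{\mathcal I^c}C_i(u^\ast)$ while the attacked sensors are simultaneously held in check by the uniform one-sided bound \eqref{eq:deltaw}. The symmetry assumption on $f_i$ is not essential here, as only $C_i(-u)=C_i(u)$ is used, which already follows from homogeneity.
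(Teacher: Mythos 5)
Your proof is correct, but it takes a recognizably different route from the paper's. The paper argues directly: it fixes the margin $\delta$ of \eqref{eq:defdelta} (strictly positive by continuity of the $C_i$, compactness of the unit sphere, and finiteness of $\mathbb C$), and uses Lemma~\ref{lemma:convex} (inequality \eqref{eq:marginalincrease}) to turn the uniform-convergence estimate into an incremental statement: for every unit direction $u$, every $t\geq\beta(z)$ and every attack, the objective strictly increases from $tu$ to $(t+1)u$. This confines any minimizer to a ball of radius $\beta(z)+1$ and hence yields robustness with an explicit bound. You instead argue by contradiction: from a hypothetical sequence of attacks with diverging bias you extract a fixed support set (finiteness of $\mathbb C$) and a limiting unit direction $u^\ast$ (compactness of $\mathcal U$), compare each minimizer $\hat x_k$ against the single point $\hat x_0=g(z)$, and pass to the limit in the optimality inequality. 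Both proofs rest on exactly the same two pillars --- uniform convergence of $h_i$ (Lemma~\ref{lemma:1}(ii)) for the benign sensors, whose residues stay in a compact set, and the one-sided bound \eqref{eq:deltaw} plus homogeneity for the attacked sensors, which holds uniformly in the attack magnitude --- so the underlying mechanism is identical. What differs is what each buys: your compactness argument dispenses with the explicit $\delta$ and with \eqref{eq:marginalincrease}, but is non-quantitative (it shows some bound $\mu(z)$ exists without exhibiting one), whereas the paper's direct argument produces the concrete bound $\|g(y)\|\leq\beta(z)+1$.

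One side remark of yours is incorrect, although it does not affect the validity of your proof of the theorem as stated. You claim the symmetry of $f_i$ is inessential because $C_i(-u)=C_i(u)$ ``already follows from homogeneity.'' This is circular: the homogeneity $C_i(\alpha u)=\card{\alpha}C_i(u)$ of Lemma~\ref{lemma:1}(i) is itself proved using $f_i(v)=f_i(-v)$; without symmetry one only obtains positive homogeneity. For instance, with scalar measurements and $f_i(v)=\max(v,0)$, one has $C_i(u)=\max(H_iu,0)\neq C_i(-u)$ in general, and your limiting inequality would only read $\sum_{i\in\mathcal I^c}C_i(-u^\ast)\leq\sum_{i\in\mathcal I}C_i(u^\ast)$, which no longer contradicts \eqref{eq:sufficiency}.
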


\begin{proof}
  Our goal is to prove that there exists a $\beta(z)$, such that for any $t\geq \beta(z)$, $\|u\| = 1$, $a\in\mathcal A$, the following inequality holds:
  \begin{align}
    \sum_{i\in\mathcal S}f_i(y_i - H_i\times tu) < \sum_{i\in\mathcal S} f_i(y_i-H_i\times (t+1)u).
    \label{eq:sufficientdiff}
  \end{align}
  As a result, any point $\|\hat x\| \geq \beta(z)+1$ cannot be the solution of the optimization problem since there exists a better point $(\|\hat x\|-1)\hat x/\|\hat x\|$. Therefore, we must have $\|g(y)\|\leq \beta(z) + 1$ and hence the estimator is robust.

  Suppose the set of malicious sensors is $\mathcal I$, to prove \eqref{eq:sufficientdiff}, we will first look at benign sensors. Due to the uniform convergence of $h_i(u,v,t)$ to $C_i(u)$ on $\mathcal U\times\{-z_i\}$ shown in Lemma \ref{lemma:1}, given any $\delta>0$ we can always find a finite constant $N_i$ depending on $\delta$ and $z_i$ such that for all $t \geq N_i(\delta,z_i)$, the following inequality holds:
  \begin{align}
    h_i(-z_i,u,t) = \frac{1}{t}\left[f_i(tH_iu-z_i)-f_i(-z_i)\right]\geq C_i(u) - \delta,
    \label{eq:deltaapprox2}
  \end{align}
  for any $\|u\|=1$. By \eqref{eq:marginalincrease}, we can derive that
 \begin{align}
    f_i((t+1)H_iu-z_i)-f_i(tH_iu-z_i)\geq C_i(u) - \delta.
    \label{eq:deltaapprox3}
  \end{align}
  We define $\beta(z) \triangleq \max_{1\leq i\leq m} N_i(\delta,z_i)$ and fix $\delta$ to be
  \begin{align}
    \delta = \frac{1}{m}\min_{\|u\|=1}\min_{\mathcal I\in\mathbb C}\left(\sum_{i\in\mathcal I^c}C_i(u) - \sum_{i\in\mathcal I}C_i(u)\right).\label{eq:defdelta}
  \end{align}
  Notice that we write $\min_{\|u\|=1}$ instead of $\inf_{\|u\|=1}$ since $C_i(u)$ is continuous and the set $\{u:\|u\|=1\}$ is compact. Hence, the infimum is achievable, which further proves that $\delta > 0$ is strictly positive. Hence, for $i = 1,\dots,m$, if $t > \beta_{\delta}(z)$ we have
  \begin{multline}
    f_i((t+1)H_iu-z_i)-f_i(tH_iu-z_i)\\
    \geq C_i(u) - \delta,\,\forall \|u\|=1.
  \end{multline}
  Since for good sensors, $z_i = y_i$, we know that
  \begin{align}
    \sum_{i\in \mathcal I^c} & \left[f_i((t+1)H_iu-z_i)-f_i(tH_iu-z_i)\right]\nonumber\\
                             &\geq  \sum_{i\in \mathcal I^c} C_i(u) -(m-p) \delta,\,\forall \|u\|=1.
                               \label{eq:suff2}
  \end{align}
  We now consider malicious sensors. By Lemma \ref{lemma:1} (iii), we know that for $i\in\mathcal I$, and any $u$
  \begin{align}
    \sum_{i\in\mathcal I} f_i(y_i - t H_i u)  - \sum_{i\in\mathcal I} f_i(y_i- (t +1)H_i u)\leq \sum_{i\in\mathcal I}C_i(-u).\label{eq:suff1}
  \end{align}
  Hence from \eqref{eq:defdelta},~\eqref{eq:suff1} and~\eqref{eq:suff2}, we know that
  \begin{align*}
    \sum_{i\in\mathcal S} f_i(y_i &- (t+1) H_i u) - \sum_{i\in\mathcal S} f_i(y_i - t H_i u) \\
                                  &\geq \sum_{i\in\mathcal I^c}C_i(u) - \sum_{i\in\mathcal I}C_i(u) - (m-p)\delta > 0,
  \end{align*}
  which proves \eqref{eq:sufficientdiff}.
\end{proof}
\begin{remark}
  Assuming that $y_i$ is a scalar and $w=0$, Fawzi et al.~\cite{Fawzi2012} prove that the state can be exactly recovered under the integrity attack if and only if for all $u\neq 0$, there are at least $2p+1$ non-zero $H_iu$. Notice that if for some $u\neq 0$, there are less than $2p+1$ non-zero $H_iu$, then we can choose $\mathcal I$ to contain the largest $p$ $H_iu$ and thus violate \eqref{eq:sufficiency}. As a result, our sufficient condition is stronger than the ones proposed in \cite{Fawzi2012}. The main reason is that we seek to use convex optimization to solve the state estimation problem, while in \cite{Fawzi2012}, a combinatorial optimization problem is needed to recover the state.
\end{remark}
We next give necessary conditions for the robustness of the estimator.
\begin{theorem}[Necessary Condition I]
  If $ C_i(u)$ is well defined for all $u\in \mathbb R^{n}$ and all $i\in\mathcal S$ but there exist some $\|u_0\|=1,~\mathcal I_0\in\mathbb C$ such that
  \begin{align}
    \sum_{i\in\mathcal I_0}C_i(u_0)>\sum_{i\in\mathcal I_0^c}C_i(u_0),\label{eq:necessity}
  \end{align}
  then the estimator is not robust to the attack.
  \label{theorem:necessity1}
\end{theorem}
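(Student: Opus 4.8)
The plan is to establish non-robustness directly, by exhibiting for the offending pair $(u_0,\mathcal I_0)$ a one-parameter family of admissible attacks $a(s)\in\mathcal A$ for which $\norm{g(z+a(s))}\to\infty$ as $s\to\infty$. Since $g(z)$ is a fixed finite vector, this forces $\sup_{a\in\mathcal A}\norm{g(z)-g(z+a)}=\infty$, so no real-valued $\mu(z)$ can satisfy \eqref{eq:defrobust} and the estimator is not robust. Concretely, I would let the attacker compromise exactly the sensors in $\mathcal I_0$ and set $a_i(s)=sH_iu_0$ for $i\in\mathcal I_0$ and $a_i(s)=0$ otherwise; since $\card{\mathcal I_0}=p$ this is a legitimate $(p,m)$-sparse attack. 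The intuition, following Remark~\ref{remark:force}, is that the compromised sensors exert a combined force $\sum_{i\in\mathcal I_0}C_i(u_0)$ dragging the estimate along $u_0$, which by \eqref{eq:necessity} strictly exceeds the maximal restoring force $\sum_{i\in\mathcal I_0^c}C_i(u_0)$ the benign sensors can muster, so the minimizer is pulled out to infinity.

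The core of the argument is a comparison between a ``decoy'' test point and an arbitrary bounded candidate. Write the attacked objective as $F_s(\hat x)\triangleq\sum_{i\in\mathcal I_0}f_i(z_i+sH_iu_0-H_i\hat x)+\sum_{i\in\mathcal I_0^c}f_i(z_i-H_i\hat x)$. Evaluating at the decoy $\hat x=su_0$ makes every compromised residue equal to $z_i$, so using Lemma~\ref{lemma:1} together with $C_i(-u_0)=C_i(u_0)$ I obtain $F_s(su_0)=\sum_{i\in\mathcal I_0^c}f_i(z_i-sH_iu_0)+O(1)=s\sum_{i\in\mathcal I_0^c}C_i(u_0)+o(s)$. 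On the other hand, fix any radius $M$ and any $\hat x$ with $\norm{\hat x}\leq M$. The vectors $v_i\triangleq z_i-H_i\hat x$ then range over a compact set, so the uniform convergence in Lemma~\ref{lemma:1}(ii) gives $f_i(v_i+sH_iu_0)=sC_i(u_0)+f_i(v_i)+o(s)$ uniformly in $\hat x$; dropping the nonnegative terms $f_i(v_i)$ yields $F_s(\hat x)\geq s\sum_{i\in\mathcal I_0}C_i(u_0)+o(s)$, again uniformly over the ball.

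Subtracting the two estimates gives, uniformly over $\norm{\hat x}\leq M$, the bound $F_s(\hat x)-F_s(su_0)\geq s\gamma+o(s)$ with $\gamma\triangleq\sum_{i\in\mathcal I_0}C_i(u_0)-\sum_{i\in\mathcal I_0^c}C_i(u_0)>0$ by hypothesis \eqref{eq:necessity}. Hence for all sufficiently large $s$ the decoy $su_0$ strictly beats every point of the ball $\norm{\hat x}\leq M$, so the minimizer $g(z+a(s))$ must lie outside that ball, i.e. $\norm{g(z+a(s))}>M$. As $M$ is arbitrary, this proves $\norm{g(z+a(s))}\to\infty$, completing the argument. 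I expect the main obstacle to be the uniformity: I must invoke Lemma~\ref{lemma:1}(ii) precisely because the competitor $\hat x$ ranges over an entire ball rather than a single point, and the error terms have to be controlled uniformly there so that the comparison holds against every candidate simultaneously. A secondary point to address is well-posedness: the argument implicitly assumes the attacked minimum is attained, so that $g(z+a(s))$ is defined; if it is not, the very same estimates show the infimum is only approached along sequences escaping to infinity, which is itself a manifestation of non-robustness.
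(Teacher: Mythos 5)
Your proposal is correct, and it reaches the conclusion by a genuinely different comparison than the paper's proof. The paper fixes a radius $r$, chooses one large attack magnitude $t$ (with $y_i = tH_iu_0$ for $i\in\mathcal I_0$, which erases the noise rather than adding to it as you do), and then shows that the \emph{unit step} $\hat x \mapsto \hat x + u_0$ strictly decreases the objective for every $\hat x$ in the ball of radius $r$: the compromised sensors each gain at least $C_i(u_0)-\delta$, via the marginal-increase inequality \eqref{eq:marginalincrease} combined with the uniform convergence of Lemma~\ref{lemma:1}(ii), while each benign sensor loses at most $C_i(u_0)$ by Lemma~\ref{lemma:1}(iii); condition \eqref{eq:necessity} makes the net change negative. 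You instead compare every bounded candidate against a single far-away \emph{decoy} $su_0$ through asymptotic growth rates: $F_s(su_0) = s\sum_{i\in\mathcal I_0^c}C_i(u_0)+o(s)$ (pointwise limit plus $C_i(-u_0)=C_i(u_0)$ suffices here), versus $F_s(\hat x)\geq s\sum_{i\in\mathcal I_0}C_i(u_0)+o(s)$ uniformly on the ball. Your route has two small advantages: it never needs Lemma~\ref{lemma:1}(iii) or \eqref{eq:marginalincrease} — nonnegativity of $f_i$ alone disposes of the benign sensors in the lower bound — and the growth-rate picture makes the ``force'' intuition of Remark~\ref{remark:force} essentially literal. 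The paper's unit-step argument buys a slightly sharper structural fact (the objective strictly improves along $u_0$ from \emph{every} bounded point, not merely that a distant decoy wins), at the cost of the extra lemma machinery. You also correctly identify that uniform convergence over the compact set $\{z_i - H_i\hat x : \|\hat x\|\leq M\}$ is the crux — the paper invokes exactly the same uniformity, just over the shifted set $\{-H_i(\hat x + u_0):\|\hat x\|\leq r\}$ — and your closing caveat about attainment of the minimum is a gap shared equally by the paper's own proof, since $g$ is defined through an $\argmin$.
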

\begin{proof}
  The robustness of the estimator is equivalent to that the optimal estimate $\hat x$ satisfies $\norm{\hat x}\leq \mu(z)$ for all $a\in \mathcal A$, where $\mu$ is a real-valued function. To this end, we will prove that for any $r > 0$, there exists a $y$ such that all $\hat x$ that satisfies $\norm{\hat x}\leq r$ cannot be the optimal solution of \eqref{eq:general}.

  We will first look at the compromised sensors. For every $\delta>0$ we can always find a finite constant $N_i(\delta)$ such that for any $\hat x\in\{\hat x:\norm{\hat x}\leq r\}$ and for all $t > N_i$, the following inequality holds:
  \begin{align}
    &f_i(t H_iu_0-H_i\hat x)-f_i(tH_iu_0 - H_i(\hat x+u_0)) \nonumber\\
    \geq &f_i((t+1) H_iu_0-H_i(\hat x+u_0))-f_i(tH_iu_0 - H_i(\hat x+u_0)) \nonumber\\
    \geq & h_i(u_0,-H_i(\hat x+u_0),t)\geq C_i(u_0)-\delta,~\forall i\in \mathcal I_{0}. \label{eq:nece1}
  \end{align}
  The first inequality is derived from \eqref{eq:marginalincrease}. The second inequality is due to the uniform convergence of $h_i(u,v,t)$ to $C_i(u)$ on $\{u_0\}\times \{v:v=-H_ix+u_0,\,\|x\|\leq  r\}$.

  Let us choose
  \begin{align*}
    \delta = \frac{1}{m}\left(\sum_{i\in\mathcal I_{0}}C_i(u_0) - \sum_{i\in \mathcal I_{0}^c}C_i(u_0)\right),
  \end{align*}
  and $t = \max_{i\in\mathcal I_0} N_i(\delta)$ and $y_i = tH_iu_0$ for all $i\in \mathcal I_0$, then we know for any $\|\hat x\|\leq  r$,
  \begin{multline*}
    \sum_{i\in\mathcal I_0}\left[f_i(y_i-H_i\hat x)-f_i(y_i - H_i(\hat x+u_0))\right] \\
    \geq \sum_{i\in \mathcal I_0} C_i(u_0)-p\delta.
  \end{multline*}
  Now let us look at the benign sensors. By Lemma \ref{lemma:1} (iii) we have
  \begin{multline}
    f_i(z_i-H_i(\hat x+u_0))-f_i(z_i-H_i\hat x) \\
    \leq  C_i(u_0),~\forall i\in \mathcal I_0^c.\label{eq:nece2}
  \end{multline}
  From \eqref{eq:nece1} and \eqref{eq:nece2},
  \begin{multline*}
    \sum_{i\in\mathcal S} f_i(y_i-H_i(\hat x+u_0))  - \sum_{i\in\mathcal S} f_i(y_i-H_i\hat x)\\
    \leq \sum_{i\in\mathcal I_0^c}C_i(u_0) - \sum_{i\in\mathcal I_{0}}C_i(u_0) + p\delta < 0.
  \end{multline*}
  Thus for such a $y$ satisfying
  \[ y_i=
    \left\{
      \begin{array}{ll}
        z_i, & \hbox{if } i\in\mathcal I_0^c\\
        tH_iu_0, & \hbox{if } i\in\mathcal I_{0},
      \end{array}
    \right.
  \]
  $\hat x+u_0$ is a better estimate than all $\hat x$ satisfying $\norm{\hat x}\leq  r$. Since $r$ is an arbitrary positive real number, we can conclude that the estimator is not robust.
\end{proof}

Before continuing on, we would like to provide some remarks on the main result. First, it is worth noticing that the existence of a well defined limit of $f_i(tH_iu)/t$ is crucial for the robustness of $g$. For example, the least squares estimator cannot be robust since $f_i$ is in quadratic form. Using the potential field and force analogies in Remark~\ref{remark:force}, one can interpret the results presented in this section as: the estimator $g$ is robust if the force generated by any sensor is bounded and if the combined force of any collection of $p$ sensors is no greater than the combined force of the remaining $m-p$ sensors.

Secondly, one can see that the conditions proved in Theorem~\ref{theorem:sufficient} and \ref{theorem:necessity1} are very tight, with only a trivial gap where the LHS of \eqref{eq:necessity} equals the RHS.

\section{Concluding Remarks}\label{section:conlusion}
We have studied the robust estimation problem where $p$ out of $m$ sensors are under attack. The malicious measurements can be arbitrarily manipulated and thus a robust estimator which can give a reliable estimate is needed. Our interest is not to study any concrete estimator in presence of attacks. Instead, we have considered a general class of estimators which integrate a large number of important estimators as special cases and given sufficient and necessary conditions for the robustness of the estimator. Future works include the robustness analysis for the dynamical state estimation problem.
\section{Appendix}
\subsection*{Proof of Lemma \ref{lemma:1}:}

  \begin{enumerate}[(i)]
  \item If $\alpha = 0$, then clearly $C_i(0)= 0$. On the other hand, if $\alpha\neq 0$, from the definition in \eqref{eq:sublinear}, we have
    \begin{align*}
      C_i(\alpha u) &= \lim_{t\rightarrow\infty} \frac{1}{t}f_i(|\alpha|t H_i u)\\
                    &= \card{\alpha} \lim_{t\rightarrow\infty} \frac{1}{\card{\alpha} t}f_i(|\alpha| t H_i u)= \card{\alpha} C_i(u).
    \end{align*}
    Due to the scaling property of $C_i(u)$ and the convexity of $f_i$, we have
    \begin{align*}
      C_i(u_1+u_2)= 2C_i\left(\frac{u_1+u_2}{2}\right)\leq C_i(u_1) + C_i(u_2).
    \end{align*}
    Therefore, we know that $C_i$ is actually a semi-norm on $\mathbb R^n$
  \item Based on the convexity of $f_i$, we obtain
    \begin{align}
      2f_i(\frac{tH_iu}{2})&\leq f_i(v+tH_iu) + f_i(-v),\label{eq:temp1}\\
      f_i(tH_iu)&\geq 2f_i(\frac{2v+tH_iu}{2}) - f(2v).\label{eq:temp2}
    \end{align}
    Dividing both sides of \eqref{eq:temp1} and \eqref{eq:temp2} by $t$ and taking limit over $t$, we have
    \begin{align}
      C_i(u)&\leq \liminf_{t\rightarrow\infty}\frac{1}{t}f_i(v+tH_iu) +   \lim_{t\rightarrow\infty}\frac{1}{t}f_i(-v),\label{eq:liminf}\\
      C_i(u)&\geq \limsup_{t\rightarrow\infty}\frac{2}{t}f_i(v+\frac{t}{2}H_iu) -   \lim_{t\rightarrow\infty}\frac{1}{t}f_i(2v).\label{eq:limsup}
    \end{align}
    Since $\lim_{t\rightarrow\infty}f_i(-v)/t=\lim_{t\rightarrow\infty}f_i(2v)/t=0$, from \eqref{eq:limsup} and \eqref{eq:liminf} we have the following pointwise limit
    \begin{align*}
      \lim_{t\rightarrow\infty}h_i(u,v,t) = C_i(u).
    \end{align*}
    Notice that for a fixed $(u,v)$, by Lemma~\ref{lemma:convex}, $h(u,v,t)$ is monotonically non-decreasing with respect to $t$. Furthermore, $C_i(u)$ is continuous since it is a semi-norm. Therefore, by Dini's theorem~\cite{rudin1964principles}, $h(u,v,t)$ converges uniformly to $C_i(u)$ on a compact set of $(u,v)$.
  \item By the convexity of $f_i$, we have
    \begin{multline*}
        f_i(v + t H_iu) - f_i(v + (t-1) H_iu)\\
        \leq f_i(v + (t+1)H_iu) - f_i(v + t H_iu),
    \end{multline*}
and
    \begin{multline*}
    f_i(v + (t+1)H_iu) - f_i(v + t H_iu) \\
\leq \frac{1}{t}(f_i(v+tH_iu) - f_i(v)).
\end{multline*}
      Then we can conclude that
    \begin{multline*}
f_i(v + H_iu) - f_i(v)\\
\leq \lim_{t\rightarrow\infty} \frac{1}{t}(f_i(v+tH_iu) - f_i(v)) =C_i(u).
    \end{multline*}
  \end{enumerate}
  \hspace*{\fill}~\QED\par\endtrivlist\unskip

\bibliographystyle{IEEEtran}
\bibliography{reference}
\end{document}